\documentclass[runningheads,11pt]{llncs}
\usepackage{amsmath,amssymb,amsfonts,mathrsfs}
\usepackage[dvips,vmargin={1in,1in},hmargin={1in,1in}]{geometry}

\newcommand{\eqdef}{\stackrel{\text{def}}{=}}

\newcommand{\F}{\ensuremath{\mathbb{F}}}

\newcommand{\card}[1]{\left | #1 \right |}

\newcommand{\prob}{\ensuremath{\textsf{Prob}}}

\newcommand{\code}[1]{\ensuremath{\mathscr{#1}}}

\renewcommand{\dim}{\ensuremath{\textsf{dim}}}

\newcommand{\word}[1]{\ensuremath{\boldsymbol{#1}}}
\newcommand{\av}{\word{a}}
\newcommand{\bv}{\word{b}}

\newcommand{\cv}{\word{c}}

\newcommand{\ev}{\word{e}}

\newcommand{\uv}{\word{u}}
\newcommand{\vv}{\word{v}}

\newcommand{\xv}{\word{x}}
\newcommand{\yv}{\word{y}}

\newcommand{\Gv}{\word{G}}

\newcommand{\Pv}{\word{P}}

\newcommand{\Sv}{\word{S}}

\newcommand{\Xv}{\word{X}}
\newcommand{\Yv}{\word{Y}}
\newcommand{\Zv}{\word{Z}}

\newcommand{\mat}[1]{\ensuremath{\boldsymbol{#1}}}
\newcommand{\Gp}{\mat{G}}

\newcommand{\Gm}{\mat{G}}
\newcommand{\Hm}{\mat{H}}

\newcommand{\Pm}{\mat{P}}

\newcommand{\Sm}{\mat{S}}

\newcommand{\fq}{GF(q)}

\newcommand{\GRS}[3]{\text{\bf GRS}_{#1}(#2,#3)}
\newcommand{\RS}[2]{\text{\bf RS}_{#1}(#2)}

\spnewtheorem{assumption}[theorem]{Assumption}{\bfseries}{\itshape}
\spnewtheorem{fact}[theorem]{Fact}{\bfseries}{\itshape}

\begin{document}

\title{A Distinguisher-Based Attack of a Homomorphic Encryption
  Scheme Relying on Reed-Solomon Codes} 

\author{Val\'erie Gauthier\inst{1}, Ayoub Otmani\inst{1} \and Jean-Pierre Tillich\inst{2}}
\institute{
GREYC - Universit\'e de Caen - Ensicaen\\
Boulevard Mar\'echal Juin, 14050 Caen Cedex, France.\\
\email{valerie.gauthier01@unicaen.fr, ayoub.otmani@unicaen.fr},
\and
SECRET Project - INRIA Rocquencourt \\ 
Domaine de Voluceau, B.P. 105   
78153 Le Chesnay Cedex - France \\
\email{jean-pierre.tillich@inria.fr}
}

\maketitle
\begin{center}
  \date{}
\end{center}

\begin{abstract}
Bogdanov and Lee   suggested a homomorphic public-key encryption scheme based on error correcting codes. 
The underlying public code is a modified Reed-Solomon code obtained
from  inserting a zero submatrix in the Vandermonde generating matrix  defining it. The columns that define 
this submatrix are kept secret and form a set $L$. We give here a distinguisher that detects if one or several columns belong
to $L$ or not. This distinguisher is obtained by considering the  code  generated by component-wise products of codewords of the public code
(the so called ``square code''). This operation is applied to punctured versions of this square code obtained by picking a subset 
 $I$ of the whole set of columns. It turns out that the dimension of
 the punctured square code  is directly related to  the cardinality of
 the intersection of $I$ with $L$. 
This allows an attack which  recovers the full set $L$ 
and  which  can then decrypt any ciphertext. 
 \end{abstract}


\section{Introduction}

The concept of homomorphic encryption was first proposed in 1978
in \cite{RAD78}. But it took more than three decades to see the first scheme of this kind
\cite{gentry09}. It is  based on
ideal lattices. Since then several proposals have been made, 
most of them rely on lattice theory. 
One challenging issue is to come up with a homomorphic encryption
scheme using different security assumptions. Recently, the first
symmetric homomorphic encryption scheme based on error-correcting
codes was proposed in \cite{AAPS11}. This work was then followed by
\cite{BL12} which can be considered as the first public-key 
homomorphic scheme based on coding theory. This particular
cryptosystem heavily relies on properties of Reed-Solomon codes.
These codes have been suggested for the first time in a public-key
cryptosystem in \cite{Niederreiter86} but it was shown to be insecure in 
\cite{SidelShesta92}. The attack recovers the underlying Reed-Solomon
allowing the decoding of any encrypted data obtained from a McEliece-type 
cryptosystem based on them. The McEliece cryptosystem
\cite{McEliece78} on the other hand uses Goppa codes.
Since its apparition, it has withstood many attacks and after more than thirty years now, it still belongs 
to the very few unbroken public key cryptosystems. This situation substantiates the claim that inverting 
the encryption function, and in particular recovering the private key from public data, is intractable. 

No significant  breakthrough has been observed with respect to the problem of 
recovering the private key \cite{Gibson91,LS01}. This has led to claim that the generator matrix of a binary Goppa 
code does not disclose any visible structure that an attacker could exploit. This is strengthened by the fact that Goppa codes
share many characteristics with random codes: for instance they asymptotically meet the Gilbert-Varshamov bound, they typically have 
a trivial permutation group, \textit{etc.}  Hence, the hardness of the Goppa code distinguishing problem, which asks whether a Goppa code can be 
distinguished from a random code,  has become 
a classical belief in code-based cryptography, and as a consequence,  a mandatory assumption to prove semantic security
in the random oracle model \cite{DBLP:journals/dcc/NojimaIKM08}, CCA2 security in the standard model \cite{DBLP:conf/ctrsa/DowsleyMN09} and security in the random oracle model 
against existential forgery \cite{CouFinSen01,DBLP:conf/weworc/Dallot07} of the signature scheme \cite{CouFinSen01}.

\medskip

In \cite{FGOPT11a}, an algorithm  that manages to 
distinguish between a random code and a Goppa code has been introduced.
This work without undermining the security of \cite{McEliece78}
prompts to wonder whether it would be possible to devise an attack based 
on such a distinguisher. 
It was found out in \cite{MP12a} that our distinguisher \cite{FGOPT11a} has an equivalent but simpler description
in terms of the component-wise product of codes. This notion was first put forward in coding theory to unify many 
different algebraic decoding algorithms \cite{Pel92,Kot92a}. This distinguisher is even more powerful in the case of Reed-Solomon codes 
than for Goppa codes because,
whereas for Goppa codes it is only successful for rates close to $1$, it can distinguish 
Reed-Solomon codes of any rate from random codes. 
In the specific case of \cite{BL12}, the underlying public code is a modified Reed-Solomon code obtained
from  inserting a zero submatrix in the Vandermonde generating matrix  defining it and in this case our distinguisher leads to an attack.
We present namely in this paper a key-recovery attack on the Bogdanov-Lee homomorphic scheme based on the version of our distinguisher presented in \cite{MP12a}. Our attack runs in polynomial time and is efficient: it only amounts to calculate the ranks of certain matrices derived from the public key.

 More precisely, in their cryptosystem the columns that define 
 the zero submatrix are kept secret and form a set $L$. We give here a distinguisher that detects if one or several columns belong
to $L$ or not. This distinguisher is obtained by considering the code  generated by component-wise products of codewords of the public code
(the so called ``square code"). This operation is applied to punctured versions of this square code obtained by picking a subset 
 $I$ of the whole set of columns. It turns out that the dimension of
 the punctured square code  is directly related to  the cardinality of
 the intersection of $I$ with $L$. 
This allows an attack which  recovers the full set $L$ 
and  which  can then decrypt any ciphertext. 

\medskip

It should also been pointed out that the properties of Reed-Solomon codes with respect to the component-wise product of codes have 
already been used to cryptanalyze a McEliece scheme based on subcodes of Reed-Solomon codes
\cite{Wie10}. The use of this product is nevertheless different in
\cite{Wie10} from the way we use it here.
Note also that our attack is not an 
adaptation of  the Sidelnikov and Shestakov approach \cite{SidelShesta92}. 
Our approach is completely new: it illustrates how a distinguisher
that detects an abnormal behaviour can be used to recover the private
key.

\medskip

In Section~ \ref{sec:basics} we recall important notions from coding
theory. In Section~\ref{Sec:BL} we introduce the cryptosystem and in
Section~\ref{attackBL} we present the key recovery attack.

\section{Reed-Solomon codes and the square code}
\label{sec:basics}
We recall in this section a few relevant results and definitions from coding theory and bring in the
fundamental notion which is used in the attack, namely the square code.
A linear \emph{code} $\code{C}$ of {\em length} $n$ and {\em dimension} $k$ over a finite field $\fq$ of $q$ elements is a subspace of dimension $k$ of the full
space $\fq^n$. 
 It is generally specified by a 
full-rank matrix called a generator matrix which is a $k\times n$ matrix $\Gp$ (with $k \leq n$) over $\fq$ whose
rows span the code:
 $$\code{C} = \left\{\uv \Gp ~|~ \uv \in \fq^k \right\}.$$
 It can also be specified by a {\em parity-check} matrix $\Hm$, which is a matrix whose right kernel is equal to 
 the code, that is
 $$
 \code{C} =  \left\{\xv \in \fq^n ~| \Hm \xv^T=0 \right\},
 $$
 where $\xv^T$ stands for the column vector which is the transpose of the row vector $\xv$.
 The {\em rate} of the code is 
given by the ratio $\frac{k}{n}$.
Code-based public-key cryptography focuses on linear codes that have a polynomial time decoding algorithm. The role of decoding 
algorithms is to correct  errors of prescribed weight. We say that a decoding algorithm
corrects $t$ errors if it recovers $\uv$ from the
knowledge of $\uv\Gp + \ev$ for all possible
$\ev \in \F_q^n$ of weight at most $t$. 

Reed-Solomon codes form a special case of codes with a very powerful low complexity decoding algorithm.
It will be convenient to use the definition of Reed-Solomon codes and generalized Reed-Solomon codes as
{\em evaluation codes} 
\begin{definition}[Reed-Solomon code and generalized Reed-Solomon code] \label{defGRS}
Let $k$ and $n$ be integers such that $1 \leqslant k < n \leqslant q$ where $q$ is a power of
a prime number.
Let $\xv = (x_1,\dots{},x_n)$ be an $n$-tuple of distinct elements of
$\fq$. 
The \emph{Reed-Solomon} code  $\RS{k}{\xv}$ of dimension $k$ is the set of 
$(p(x_1),\dots{},p(x_n))$
when $p$ ranges over all polynomials of degree $\leqslant k-1$
with coefficients in $\fq$.
The generalized Reed-Solomon code $\GRS{k}{\xv}{\yv}$ of dimension $k$ is associated to a couple 
$(\xv,\yv) \in \fq^n \times \fq^n$ where $\xv$ is chosen as above and
the entries $y_i$ are arbitrary non zero elements in $\fq$. 
It is defined as the set of  $(y_1p(x_1),\dots{},y_np(x_n))$
where $p$ ranges over all polynomials of degree $\leqslant k-1$
with coefficients in $\fq$.
\end{definition}

Generalized Reed-Solomon codes are quite important in coding theory due to the conjunction of several factors such as :\\
(i) their minimum distance $d$ is maximal among all codes of the same dimension and length since they are MDS codes (their distance is equal to
$n-k+1$),\\
(ii) they can be efficiently decoded in polynomial time when the number of errors
is less than or equal to $\lfloor \frac{d-1}{2} \rfloor = \lfloor \frac{n-k}{2} \rfloor$. 

It has been suggested to use them in a public-key cryptosystem for the first time in \cite{Niederreiter86} but 
it was discovered that this scheme is insecure in \cite{SidelShesta92}. Sidelnikov and Shestakov namely showed that it is 
possible  
to recover in polynomial time for any generalized Reed-Solomon code a possible couple $(\xv,\yv)$ which defines it.
This is all what is needed to decode efficiently such codes and is therefore enough to break the Niederreiter cryptosystem suggested in 
\cite{Niederreiter86} or a McEliece type cryptosystem \cite{McEliece78} when Reed-Solomon are used instead of Goppa codes.

We could not find a way to adapt the Sidelnikov and Shestakov approach for cryptanalyzing the Bogadnov and Lee
cryptosystem. However a Reed-Solomon displays a quite peculiar property with respect to the component-wise
product which is denoted by  $\av \star \bv $ for two vectors 
 $\av=(a_1, \dots, a_n)$ and $\bv=(b_1, \dots, b_n)$ and which is defined by
 $\av \star \bv \eqdef (a_1b_1,\dots{},a_n b_n)$. This can be seen by bringing in the following definition

\begin{definition}[star product of two codes, square code]
Let $\code{A}$ and $\code{B}$ be two codes of length $n$. The
\emph{star product code} denoted by $<\code{A} \star \code{B}>$ of $\code{A}$ and $\code{B}$ is the vector space
spanned by all products $\av \star \bv$ where $\av$ and $\bv$ range over $\code{A}$ and $\code{B}$ respectively.
When $\code{B} = \code{A}$,  $<\code{A} \star \code{A}>$ is called the \emph{square code} of $\code{A}$
and is denoted by $<\code{A}^2>$.
\end{definition}

It is clear that $<\code{A} \star \code{B}>$ is also generated by the $\av_i \star \bv_j$'s where the $\av_i$'s and the
$\bv_j$'s form a basis of $\code{A}$ and $\code{B}$ respectively.
Therefore
\begin{proposition}
 $$\dim(<\code{A} \star \code{B}>) \leq \dim(\code{A}) \dim(\code{B}).$$ 
\end{proposition}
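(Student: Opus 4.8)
The plan is to exploit the spanning observation stated just above the proposition: the star product is bilinear, so a spanning set of $<\code{A}\star\code{B}>$ can be read off directly from bases of $\code{A}$ and $\code{B}$, and counting the size of this set yields the bound.

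First I would fix a basis $\av_1,\dots,\av_{\dim(\code{A})}$ of $\code{A}$ and a basis $\bv_1,\dots,\bv_{\dim(\code{B})}$ of $\code{B}$. For arbitrary elements $\av=\sum_i \lambda_i\av_i$ of $\code{A}$ and $\bv=\sum_j \mu_j\bv_j$ of $\code{B}$, the componentwise definition $\av\star\bv\eqdef(a_1b_1,\dots,a_nb_n)$ is visibly bilinear: expanding each coordinate and regrouping gives
$$\av\star\bv=\sum_{i,j}\lambda_i\mu_j\,(\av_i\star\bv_j).$$
Hence every generator $\av\star\bv$ of $<\code{A}\star\code{B}>$ lies in the span of the family $\{\av_i\star\bv_j\}_{i,j}$.

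Since $<\code{A}\star\code{B}>$ is, by definition, spanned by all products $\av\star\bv$, it is therefore spanned by the $\dim(\code{A})\,\dim(\code{B})$ vectors $\av_i\star\bv_j$. As the dimension of a vector space never exceeds the cardinality of any spanning set, this immediately gives $\dim(<\code{A}\star\code{B}>)\leq\dim(\code{A})\,\dim(\code{B})$.

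There is no genuine obstacle here; the only point meriting care is the coordinate-wise verification of bilinearity, which is immediate from the definition of $\star$. It is worth noting that the inequality is typically strict, since the products $\av_i\star\bv_j$ need not be linearly independent — and indeed the whole attack developed later rests precisely on the fact that for the modified Reed-Solomon codes of interest the dimension of the square code falls well below this generic upper bound.
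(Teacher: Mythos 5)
Your proof is correct and follows exactly the paper's (implicit) argument: the paper dismisses the key step with ``It is clear that $<\code{A} \star \code{B}>$ is also generated by the $\av_i \star \bv_j$'s,'' and you have simply filled in the bilinearity computation behind that claim before counting the spanning set. Nothing further is needed.
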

We expect that the square code when applied to a random linear code should be a code of dimension of
order $\min\left\{\binom{k+1}{2},n\right\}$. Actually by using the proof technique of \cite{FGOPT11a} it can be shown for instance that with probability 
going to $1$ as $k$ tends to infinity, the square  
code is of dimension $\min\left\{ \binom{k+1}{2}(1+o(1)),n\right\}$ when $k$ is of the form $k=o(n^{1/2})$.
 On the other hand generalized Reed Solomon codes behave
in a completely different way

\begin{proposition}
$<\GRS{k}{\xv}{\yv}^2>=\GRS{2k-1}{\xv}{\yv\star\yv}$.
\end{proposition}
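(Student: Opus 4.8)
The plan is to prove the identity by double inclusion, exploiting the fact that the star product of two evaluation vectors is again an evaluation vector whose underlying polynomial is the product of the two original polynomials.

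First I would establish $<\GRS{k}{\xv}{\yv}^2> \subseteq \GRS{2k-1}{\xv}{\yv \star \yv}$. By the observation preceding the bound $\dim(<\code{A} \star \code{B}>) \leq \dim(\code{A})\dim(\code{B})$, the square code is spanned by the products $\av \star \bv$ with $\av, \bv \in \GRS{k}{\xv}{\yv}$, so it suffices to place each such product inside the target code. Writing $\av = (y_i p(x_i))_{1 \le i \le n}$ and $\bv = (y_i r(x_i))_{1 \le i \le n}$ for polynomials $p, r$ of degree $\le k-1$, the $i$-th coordinate of $\av \star \bv$ is $y_i^2\,(pr)(x_i)$. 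Since $pr$ has degree at most $2(k-1) = (2k-1)-1$ and $(y_1^2, \dots, y_n^2) = \yv \star \yv$, this is exactly a codeword of $\GRS{2k-1}{\xv}{\yv \star \yv}$, which settles the inclusion.

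For the reverse inclusion I would work with the monomial basis. Choosing $p(X) = X^a$ and $r(X) = X^b$ with $0 \le a, b \le k-1$ yields generators of the square code whose $i$-th coordinate is $y_i^2 x_i^{a+b}$; that is, the vector $\gv_\ell \eqdef (y_i^2 x_i^\ell)_{1 \le i \le n}$ lies in $<\GRS{k}{\xv}{\yv}^2>$ for every $\ell$ expressible as $a+b$ with $a, b \in \{0, \dots, k-1\}$. The one point to verify is that this covers every degree $0 \le \ell \le 2k-2$: for $\ell \le k-1$ take $(a,b) = (\ell, 0)$, and for $k \le \ell \le 2k-2$ take $(a,b) = (k-1, \ell-k+1)$, which is admissible because then $0 \le \ell-k+1 \le k-1$. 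The vectors $\gv_0, \dots, \gv_{2k-2}$ are precisely the evaluations of the monomial basis $1, X, \dots, X^{2k-2}$ weighted by $\yv \star \yv$, hence they span $\GRS{2k-1}{\xv}{\yv \star \yv}$, giving the containment of the target code in the square code.

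Combining the two inclusions yields the claimed equality. None of the steps is a genuine obstacle; the only thing requiring care is the elementary verification that every exponent up to $2k-2$ is realizable as a sum of two exponents each bounded by $k-1$, and it is exactly this surjectivity that forces the parameter of the resulting code to be precisely $2k-1$ rather than anything smaller.
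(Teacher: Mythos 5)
Your proof is correct and follows essentially the same route as the paper, which likewise observes that a star product of two evaluation vectors is the evaluation of the product polynomial (of degree at most $2k-2$) weighted by $\yv \star \yv$, and then asserts the converse inclusion. You merely make the converse explicit by realizing each monomial exponent $0 \leqslant t \leqslant 2k-2$ as a sum $a+b$ with $0 \leqslant a,b \leqslant k-1$, a detail the paper leaves to the reader.
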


This follows immediately from the definition of a generalized Reed Solomon code as an evaluation code since
the star product of two elements $\cv=(y_1 p(x_1),\dots,y_np(x_n))$ and $\cv'=(y_1 q(x_1),\dots,y_nq(x_n))$ of $\GRS{k}{\xv}{\yv}$ where
$p$ and $q$ are two polynomials of degree at most $k-1$ is of the form 
$$\cv \star \cv' = (y_1^2 p(x_1)q(x_2),\dots,y_n^2 p(x_n)q(x_n))=(y_1^2r(x_1),\dots,y_n^2r(x_n))$$
where $r$ is a polynomial of degree $\leq 2k-2$. Conversely, any element of the form $(y_1^2r(x_1),\dots,y_n^2r(x_n))$
where $r$ is a polynomial of degree less than or equal to $2k-1$  is a linear combination of star products of two elements of $\GRS{k}{\xv}{\yv}$.

This proposition shows
that the square code is only of dimension $2k-1$ when $2k-1 \leq n$, which is quite unusual.
This property can also be used in the case $2k-1 >n$. To see this, consider the dual of the Reed-Solomon code.
The {\em dual} $\code{C}^\perp$ of a code $\code{C}$ of length $n$ over $\fq$ is defined by 
$$
\code{C}^\perp = \left\{\xv \in \fq^n| (\xv,\yv)=0, \yv \in \code{C}\right\},
$$ 
where $(\xv,\yv)= \sum x_i y_i$ stands for the standard inner product between elements of $\fq^n$.
The dual of a generalized Reed-Solomon code is itself a generalized Reed-Solomon code, see
\cite[Theorem 4, p.304]{MacSloBook}

 \begin{proposition}\label{pr:dual}
 $$
 \GRS{k}{\xv}{\yv}^\perp = \GRS{n-k}{\xv}{\yv'} 
 $$
 where the length of $\GRS{k}{\xv}{\yv}$ is $n$ and $\yv'$ is a certain element of $\fq^n$ depending 
 on $\xv$ and $\yv$.
 \end{proposition}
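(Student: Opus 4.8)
The plan is to prove the two codes coincide by combining a dimension count with a one-sided inclusion. First I would record that the evaluation map $p \mapsto (y_1 p(x_1),\dots,y_n p(x_n))$ is injective on polynomials of degree $\leqslant k-1$: a nonzero such $p$ cannot vanish at the $n \geqslant k$ distinct points $x_i$, and the $y_i$ are nonzero. Hence $\GRS{k}{\xv}{\yv}$ has dimension exactly $k$, so its dual $\GRS{k}{\xv}{\yv}^\perp$ has dimension $n-k$. Since $\GRS{n-k}{\xv}{\yv'}$ also has dimension $n-k$ for any admissible $\yv'$, it suffices to exhibit a single $\yv'$ with nonzero entries such that $\GRS{n-k}{\xv}{\yv'} \subseteq \GRS{k}{\xv}{\yv}^\perp$; equality of dimensions then forces equality of the codes.

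Next I would translate the inclusion into a concrete condition on $\yv'$. A generator of $\GRS{n-k}{\xv}{\yv'}$ has the form $(y_1' q(x_1),\dots,y_n' q(x_n))$ with $\deg q \leqslant n-k-1$, so orthogonality to every $(y_1 p(x_1),\dots,y_n p(x_n))$ with $\deg p \leqslant k-1$ amounts to
$$\sum_{i=1}^n y_i y_i'\, p(x_i) q(x_i) = 0.$$
As $p$ and $q$ range over their respective degree ranges, the products $pq$ span exactly the space of polynomials of degree $\leqslant n-2$ (already the monomials $x^a x^b$ with $a \leqslant k-1$ and $b \leqslant n-k-1$ realize every power $x^0,\dots,x^{n-2}$). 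Writing $\lambda_i \eqdef y_i y_i'$, the required condition becomes the single family of linear constraints
$$\sum_{i=1}^n \lambda_i\, x_i^m = 0, \qquad 0 \leqslant m \leqslant n-2.$$

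The heart of the argument, and the step I expect to be the real obstacle, is solving this Vandermonde-type system explicitly with all $\lambda_i$ nonzero. My plan is to set $\lambda_i \eqdef \left(\prod_{j \neq i}(x_i - x_j)\right)^{-1}$ and to verify the classical identity
$$\sum_{i=1}^n \frac{x_i^m}{\prod_{j \neq i}(x_i - x_j)} = 0 \quad \text{for } 0 \leqslant m \leqslant n-2,$$
which I would justify through Lagrange interpolation: the coefficient of $x^{n-1}$ in the degree-$\leqslant(n-1)$ interpolant of the function $x \mapsto x^m$ at the nodes $x_1,\dots,x_n$ is precisely the left-hand sum, and for $m \leqslant n-2$ that interpolant is the polynomial $x^m$ itself, whose $x^{n-1}$-coefficient vanishes. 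Since the $x_i$ are distinct, every $\lambda_i$ is nonzero, so $y_i' \eqdef \lambda_i / y_i$ is a well-defined nonzero multiplier. This choice of $\yv'$ satisfies the constraints, establishes the inclusion, and hence the proposition, while making the dependence of $\yv'$ on $\xv$ and $\yv$ completely explicit.
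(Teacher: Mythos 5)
Your argument is correct and complete. The paper itself gives no proof of this proposition (it simply cites MacWilliams--Sloane, Theorem 4, p.~304), and what you have written is precisely the standard argument behind that reference: a dimension count reduces the claim to one inclusion, the inclusion reduces to the power-sum constraints $\sum_i \lambda_i x_i^m = 0$ for $0 \leqslant m \leqslant n-2$, and the choice $\lambda_i = \bigl(\prod_{j \neq i}(x_i - x_j)\bigr)^{-1}$, justified by reading off the $x^{n-1}$-coefficient of the Lagrange interpolant of $x^m$, solves them with all $\lambda_i$ nonzero. Nothing is missing; you have in fact supplied a self-contained proof where the paper only offers a citation.
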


 Therefore when $2k-1 > n$ a Reed-Solomon code $\GRS{k}{\xv}{\yv}$ can also be distinguished from 
 a random linear code of the same dimension by computing the dimension of
 $<\left(\GRS{k}{\xv}{\yv}^\perp\right)^2>$. We have in this case
 $$<\left(\GRS{k}{\xv}{\yv}^\perp\right)^2>
 =<\GRS{n-k}{\xv}{\yv'}^2>
 = <\GRS{2n-2k-1}{\xv}{\yv'\star\yv'}>
 $$
 and we obtain a code of dimension $2n-2k-1$.

 The star product of two codes is the fundamental notion used in the decoding algorithm based on an error correcting pair
 \cite{Pel92,Kot92a} which unifies common ideas to many algebraic decoding algorithms. It has been used for the first time to cryptanalyze a McEliece scheme based on subcodes of Reed-Solomon codes
 \cite{Wie10}. The use of the star product is nevertheless different in \cite{Wie10} from the way we use it here. In this paper,
 the star product is used to identify  for a certain subcode  $\code{C}$ of a generalized Reed-Solomon code $\GRS{k}{\xv}{\yv}$
 a possible pair $(\xv,\yv)$. This is achieved by computing $<\code{C}^2>$ which in the case which is considered turns out to 
 be equal to $<\GRS{k}{\xv}{\yv}^2>$ which is equal to $\GRS{2k-1}{\xv}{\yv \star \yv}$. The Sidelnikov and Shestakov is then 
 used on  $<\code{C}^2>$ to recover a possible $(\xv,\yv\star\yv)$ pair to describe $<\code{C}^2>$ as a generalized Reed-Solomon
 code. From this, a possible $(\xv,\yv)$ pair for which $\code{C} \subset \GRS{k}{\xv}{\yv}$ is deduced.


\section{The Bogdanov-Lee Cryptosystem} \label{Sec:BL}
The cryptosystem proposed by Bogdanov and Lee in \cite{BL12} is a 
public-key homomorphic encryption scheme based on linear codes. 
It encrypts a plaintext $m$ from $\fq$  into a ciphertext $\cv$ that belongs to  
$\fq^n$ where $n$ is a given integer. The key generation requires a non-negative 
integer $\ell$ such that $3\ell < n$ and a subset $L$ of $\{1,\dots{},n\}$ of cardinality $3\ell$.
A set of $n$ distinct elements $x_1,\dots, x_n$ from $\fq$ are 
generated at random.  They serve to construct a $k \times n$ matrix $\Gm$
whose $i$-th column $\Gm^T_i$ ($1\leqslant i \leqslant n$)
is defined by:
\begin{equation*}
\Gv_i^T \eqdef \left\{
\begin{array}{ll}
(x_i, x_i^2, \dots, x_i^{\ell}, 0, \dots , 0 )                  &
\text{if } i \in L\\
& \\
(x_i, x_i^2, \dots, x_i^{\ell}, x_i^{\ell+1}, \dots , x_i^{k} ) & \text{if } i \notin L
\end{array} \right.
\end{equation*}
where the symbol $^T$ stands for the transpose.

In other words, 
when $L$ is the set $\{1,\dots{},3 \ell\}$, $\Gm$ is the following matrix:
\begin{equation*}
\left(
\begin{array}{cccccc}
x_1  & \ldots & x_{3\ell} & x_{3\ell +1} & \ldots & x_{n}\\
\vdots &   & \vdots & \vdots &  & \vdots\\
x_1^{\ell} & \ldots & x_{3\ell}^{\ell} & x_{3\ell +1}^{\ell} & \ldots & x_n^{\ell}\\
0  & \ldots & 0 & x_{3\ell +1}^{\ell+1} & \ldots & x_n^{\ell+1}\\
\vdots &    & \vdots & \vdots &  & \vdots\\
0 &  \ldots & 0 & x_{3\ell +1}^{k} & \ldots & x_n^{k}\\
\end{array} \right).
\end{equation*}

The cryptosystem is now defined as follows.
\begin{itemize}
	\item \textbf{Secret key:} $(L, \Gm)$.
	\item \textbf{Public key:} $\Pm \eqdef \Sm \Gm $ where $\Sm$
          is a $k \times k$ random invertible  matrix over $\fq$.
	\item \textbf{Encryption:} the ciphertext $\cv \in \fq^n$ of a plaintext
          $m \in \fq$ is obtained by  picking $\xv$ in $\fq^k$ uniformly at random and $\ev$
                          in $\fq^n$  by choosing its components according to a certain distribution $\tilde{\eta}$, then computing
			$\displaystyle \cv \eqdef \xv \Pm  + m \textbf{1} +  \ev$ 
               where $\textbf{1} \in \fq^n$ is the all-ones row vector.

	\item \textbf{Decryption:} the
          linear system \eqref{decryptionsystem} is solved for $\yv \eqdef(y_1,\dots{},y_n) \in \fq^n$: 


				  \begin{equation} \label{decryptionsystem}
                   \left\{
                   \begin{array}{rcl}
                    \displaystyle \Gv \yv^T  & = & 0\\
                    \displaystyle \sum_{i \in L} y_i & = & 1\\
                    y_i&=&0   \textrm{ for all }  i \notin L.
                   \end{array} \right.
                  \end{equation} 
            The plaintext is $m = \displaystyle  \sum_{i = 1}^n y_i c_i$.

\end{itemize}

The decryption algorithm will output  the correct plaintext when 
$\ell$ and $n$ are chosen such that  the entry $e_i$ at position $i$ of the error vector is zero when $i \in L$.
The distribution $\eta$ which is used to draw at random the coordinates of $\ev$ is chosen such that this
property holds with very large probability.
To check the correctness of the algorithm when this property on $\ev$ holds, notice that  the linear system~\eqref{decryptionsystem} has $3\ell$ unknowns and $\ell +1$ equations and since it is by construction of rank $\ell+1$, it always admits at least one solution. Then observe that
\begin{eqnarray*}
\sum_{i = 1}^n y_i c_i &=& (\xv \Pm + m \textbf{1} + \ev) \yv^T\\
& =&  (\xv \Pm + m \textbf{1} ) \yv^T \;\;\;\text{   (since $e_i=0$ if $i\in L$ and $y_i=0$ if $i \notin L$)}\\
& = &\xv \Sm \Gm \yv^T + m \sum_{i=1}^n y_i\\
& = & m \;\;\;\text{   (since $\Gv \yv^T=0$ and  $\sum_{i=1}^n y_i=1$)}.
\end{eqnarray*}

The parameters $k,q,\ell$ and the noise distribution $\tilde{\eta}$ are chosen such as
\begin{itemize}
\item $q = \Omega\left(2^{n^\alpha}\right)$;
\item $k=\Theta\left(n^{1-\alpha/8}\right)$;
\item the noise distribution $\tilde{\eta}$ is the $q$-ary symmetric channel with noise rate $\eta = \Theta\left(1/n^{1-\alpha/4}\right)$,
that is $\prob(e_i=0)=1-\eta$ and $\prob(e_i=x)=\frac{\eta}{q-1}$ for any $x$ in $\fq$ different from zero;
\item $\ell = \Theta\left(n^{\alpha/4}\right);$
\end{itemize}
where $\alpha$ is some constant in the range $(0,\frac{1}{4}]$. It is readily checked that the probability that $e_i \neq 0$ for $i \in L$ is 
vanishing as $n$ goes to infinity since it is upper-bounded by $\eta \ell = \Theta\left( \frac{n^{\alpha/4}}{n^{1-\alpha/4}}\right)=\Theta\left( n^{-1+\alpha/2}\right)=o(1)$.

\section{An efficient attack on the Bogdanov-Lee homomophic cryptosystem}\label{attackBL}

\subsection{Outline}
The attack consists in first recovering the secret set $L$ and from here finds directly a suitable vector $\yv$ by solving the system
\begin{equation} \label{decryptionsystempub}
      \left\{ 
      \begin{array}{lcl}
      \displaystyle  \Pv \yv^T    & = & 0\\
      \displaystyle \sum_{i \in L} y_i & = & 1\\
       y_i&=&0   \textrm{ for all }  i \notin L.
      \end{array} \right.
\end{equation}
Indeed, requiring that $\Pv \yv^T     =  0$  is equivalent to $\Sv \Gv \yv^T=0$ and since $\Sv$ is invertible this is equivalent to the
equation $\Gv\yv^T=0$. Therefore System \eqref{decryptionsystempub} is equivalent to the
``secret'' system  \eqref{decryptionsystem}. An attacker may therefore recover $m$ without even knowing $\Gv$ just 
by outputting $\sum_i y_i c_i$ for any solution $\yv$ of \eqref{decryptionsystempub}.
In the following subsection, we will explain how   $L$ can be recovered from $\Pv$ in 
polynomial time. 

\subsection{Recovering $L$}\label{attack}

Our attack relies heavily on the fact that the public matrix may be viewed as a the generator matrix of a code 
$\code{C}$ which 
is quite close to a generalized Reed-Solomon code (or to a Reed-Solomon if a row consisting only of $1$'s is added
to it). Notice that any punctured version of the code has also this property (a punctured code consists in 
keeping only a fixed subset of positions in a codeword). More precisely, let us introduce 
\begin{definition}
For any $I \subset \{1,\dots{},n\}$ of cardinality $\card{I}$, the restriction of a code $\code{A}$ of length $n$ is the subset of $\fq^{|I|}$ defined as:
$$
\code{A}_I \eqdef \Big \{  \vv  \in \fq^{\card{I}} \mid \exists \av \in \code{A}, \vv = (a_i)_{i \in I} \Big \}.
$$  
\end{definition}

The results about the unusual dimension of the square of a Reed-Solomon codes which are given in 
Section \ref{sec:basics} prompt us to study the dimension of the square code $<\code{C}^2>$ or more generally the
dimension of $<\code{C}_I^2>$. When $I$ contains no positions in $L$, then $\code{C}_I$ is nothing but a generalized Reed-Solomon 
code and we expect a dimension of $2k-1$ when $|I|$ is larger than $2k-1$. On the other hand, when there are positions in $I$
which also belong to $L$ we expect the dimension to become bigger and  the dimension of $<\code{C}^2>$ to behave 
as an increasing function of $|I \cap L|$. This is exactly what happens as shown in the proposition below.
 
\begin{proposition} \label{prop:dsg}
Let $I$ be a subset of $\{1,\dots{},n\}$ and set $J
\eqdef I \cap L$. 
If the cardinality of $I$ and
$J$ satisfy $\card{J} \leqslant
\ell-1$ and $\card{I} -  \card{J} \geqslant 2k$ then
\begin{equation}
\dim(<\code{C}_I ^2>) = 2k - 1 + \card{J}.
\end{equation}
\end{proposition}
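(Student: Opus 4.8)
The plan is to reduce the statement to counting polynomial evaluations. Writing a codeword of $\code{C}$ as the evaluation of a polynomial $f(X)=u_1X+\cdots+u_kX^k$ with no constant term and degree $\le k$, its $i$-th coordinate equals $f(x_i)$ for $i\notin L$ and $f_{\le\ell}(x_i)$ for $i\in L$, where $f_{\le\ell}$ keeps only the monomials of degree $\le\ell$. Taking the rows $\rv_1,\dots,\rv_k$ of $\Gm$ restricted to $I$ as a basis of $\code{C}_I$, so that $\rv_j$ evaluates $X^j$ on all of $I$ when $j\le\ell$ and evaluates $X^j$ on $I\setminus L$ while vanishing on $J$ when $j>\ell$, I would first list the spanning family $\rv_a\star\rv_b$ of $<\code{C}_I^2>$. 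Inspecting the exponents shows it splits into two types: the vectors $\mathrm{ev}_I(X^s)$ (evaluating $X^s$ on all of $I$) for $2\le s\le 2\ell$, arising when $a,b\le\ell$; and the vectors $\psi_s$ (evaluating $X^s$ on $I\setminus L$, zero on $J$) for $\ell+2\le s\le 2k$, arising whenever $\max(a,b)>\ell$.

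Next I would decompose $\fq^{I}=\fq^{I\setminus L}\oplus\fq^{J}$ and apply rank--nullity to the restriction map $\rho$ sending an element of $<\code{C}_I^2>$ to its $I\setminus L$-part; throughout I use that the $x_i$ are distinct and nonzero (a zero evaluation point would make a column of $\Gm$ vanish). Since both types of generators restrict on $I\setminus L$ to $\mathrm{ev}_{I\setminus L}(X^s)$ with $s$ ranging over $\{2,\dots,2\ell\}\cup\{\ell+2,\dots,2k\}=\{2,\dots,2k\}$, the image of $\rho$ equals $\mathrm{ev}_{I\setminus L}(W)$ with $W=\mathrm{span}\{X^2,\dots,X^{2k}\}$ of dimension $2k-1$. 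The hypothesis $\card{I}-\card{J}\ge 2k$ makes this evaluation injective, because a nonzero element of $W$ is divisible by $X^2$ and so has at most $2k-2$ nonzero roots, too few to vanish at all $\card{I}-\card{J}\ge 2k$ of the nonzero points $x_i$ ($i\in I\setminus L$). Hence $\dim(<\code{C}_I^2>)=2k-1+\dim\ker\rho$, and everything reduces to proving $\dim\ker\rho=\card{J}$.

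To identify $\ker\rho$ I would write a general element as $\sum_s\alpha_s\,\mathrm{ev}_I(X^s)+\sum_s\beta_s\,\psi_s$; its $I\setminus L$-part is $\mathrm{ev}_{I\setminus L}(h)$ for the combined polynomial $h\in W$ whose degree-$s$ coefficient is $\alpha_s$, $\alpha_s+\beta_s$ or $\beta_s$ according to whether $s\le\ell+1$, $\ell+2\le s\le2\ell$ or $s>2\ell$. By the injectivity just established, lying in $\ker\rho$ forces $h=0$, which pins down $\alpha_s=0$ for $s\le\ell+1$, $\beta_s=-\alpha_s$ for $\ell+2\le s\le2\ell$, and $\beta_s=0$ for $s>2\ell$. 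The surviving $J$-part is then $\mathrm{ev}_J(g)$ with $g$ ranging freely over $\mathrm{span}\{X^{\ell+2},\dots,X^{2\ell}\}$, so $\ker\rho=\mathrm{ev}_J\big(\mathrm{span}\{X^{\ell+2},\dots,X^{2\ell}\}\big)$.

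The remaining and most delicate point is that this last space has dimension exactly $\card{J}$. I would factor out the common $X^{\ell+2}$: since the $x_i$ with $i\in J$ are distinct and nonzero, evaluation on $J$ equals, up to the nonzero diagonal scalars $x_i^{\ell+2}$, the evaluation of polynomials of degree $\le\ell-2$ at $\card{J}$ distinct points. Because $\card{J}\le\ell-1$, Lagrange interpolation shows this evaluation is onto $\fq^{J}$, giving $\dim\ker\rho=\card{J}$ and hence $\dim(<\code{C}_I^2>)=2k-1+\card{J}$. I expect the bookkeeping of the exponent ranges in the spanning family and this final interpolation argument --- which is exactly where the two hypotheses $\card{I}-\card{J}\ge 2k$ and $\card{J}\le\ell-1$ are each consumed --- to be the only genuinely delicate steps.
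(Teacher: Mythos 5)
Your proof is correct, and although it starts from the same spanning family as the paper (your $\mathrm{ev}_I(X^s)$ and $\psi_s$ are exactly the vectors $\Xv^s$ and $\Yv^s$ of Lemma~\ref{lem:generating_set}, with the same exponent ranges), the way you extract the dimension from that family is genuinely different. The paper exhibits an explicit basis: Lemma~\ref{lem:dep} uses the annihilator polynomial $\prod_{i \in J}(U-x_i)\,U^{t-\card{J}}$ to show that the $\Xv^t$ with $t \geqslant \ell+\card{J}+2$ are redundant, and Proposition~\ref{prop:basis} then verifies by hand that the remaining $2k-1+\card{J}$ vectors are linearly independent. You instead apply rank--nullity to the restriction map onto the coordinates of $I \setminus L$: its image is the evaluation of $\mathrm{span}\{X^2,\dots,X^{2k}\}$ at $\card{I}-\card{J} \geqslant 2k$ distinct nonzero points (injective by root counting, hence of dimension $2k-1$), and its kernel consists of the vectors supported on $J$ coming from $\mathrm{span}\{X^{\ell+2},\dots,X^{2\ell}\}$, which fills all of $\fq^{J}$ by Lagrange interpolation precisely because $\card{J} \leqslant \ell-1$. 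The two hypotheses are consumed at the analogous points of both arguments (root counting off $J$, degree counting on $J$), and both rely on the $x_i$ being nonzero --- an assumption the paper uses silently in the step ``$Q(x_i)=0$ implies $S(x_i)=0$'' and which you state explicitly. What your route buys is that the dimension splits cleanly as $(2k-1)+\card{J}$ without having to guess a candidate basis and prove its independence; what the paper's route buys is an explicit basis of $<\code{C}_I^2>$, which is slightly more information than the dimension alone.
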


The proof of this proposition can be found  in Appendix \ref{sec:proof_prop:dsg}.
An attacker can exploit this proposition to mount a distinguisher that recognizes whether a given position belongs 
to the secret set $L$. At  first  a set $I$ which satisfies with high probability the assumptions of
Proposition \ref{prop:dsg} is randomly chosen. Take for instance $|I|=3k$. Then $d_I \eqdef \dim(<\code{C}_I^2> )$ is computed. Next,  one element $x$ is removed from $I$ to get a new set
$I'$ and  $d_{I'} = \dim(<\code{C}_{I'}^2> )$ is computed. The only two possible cases are then:
\begin{enumerate}
  \item if $x \notin L$ then $d_{I'} = d_{I}$ 
	\item and if $x \in L$ then $d_{I'} = d_{I}-1$.
\end{enumerate}
By repeating this procedure, the whole set $J=I \cap L$ is easily recovered. The
next step now is to find all the elements of $L$ that are not in
$I$. One solution is to exchange one element in $I \setminus J$ by another element in
$\{1,\dots{},n\} \setminus I$ and compare the values of $d_I$. If it
increases, it means that the new element belongs to $L$. At the
end of this procedure the set $L$ is totally recovered. This probabilistic algorithm is obviously of  polynomial time complexity
and breaks completely the homomorphic scheme suggested in \cite{BL12}.

\bibliographystyle{alpha}
\bibliography{crypto%
}

\appendix

\newpage

\section{Proof of Proposition~\ref{prop:dsg}}
\label{sec:proof_prop:dsg}

The proof of Proposition~\ref{prop:dsg} proceeds by exhibiting a basis 
 of the linear space $<\code{C}_I^2>$.  For this purpose  we define for any $t$ in $\{1,\dots{},k\}$, 
$\Xv^t \eqdef
(x^t_i)_{i \in I}$  and $\Yv^t \eqdef (Y^t_i)_{i \in I}$ 
with:
\begin{equation*}
Y_i^t \eqdef \left\{
\begin{array}{ccl}
0       & & \text{if } i \in J\\
        & &\\
x_i^t   & & \text{if } i \in I \setminus J.
\end{array} \right.
\end{equation*}
Notice that  $\code{C}_I$ is  the  vector space spanned by the  $\Xv^t$'s
for $1 \leqslant t \leqslant \ell$ and the $\Yv^t$'s for $\ell +1
\leqslant t \leqslant k$. 

The proof of Proposition \ref{prop:dsg} starts by giving a generating set for 
$<\code{C}_I^2>$.

\begin{lemma}\label{lem:generating_set}
$<\code{C}_I^2>$ is generated by the set of vectors $\Xv^t$ for
$2 \leqslant t \leqslant 2\ell$ and $\Yv^t$ for 
$\ell + 2 \leqslant t \leqslant 2k$.
\end{lemma}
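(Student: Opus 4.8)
The plan is to establish the lemma by directly computing all pairwise star products of a basis of $\code{C}_I$ and reading off the resulting vectors. I would start from the description, recalled just before the statement, that $\code{C}_I$ is spanned by the vectors $\Xv^t$ for $1 \leqslant t \leqslant \ell$ together with the vectors $\Yv^t$ for $\ell+1 \leqslant t \leqslant k$. Since, by the remark following the definition of the star product code, $<\code{C}_I^2>$ is generated by the star products of pairs of these generators, it suffices to evaluate the three families $\Xv^s \star \Xv^t$, $\Xv^s \star \Yv^t$ and $\Yv^s \star \Yv^t$ (with the equal-index cases giving the squares) and to determine which vectors $\Xv^u$ and $\Yv^u$ they produce.

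The first family is immediate: the $i$-th coordinate of $\Xv^s \star \Xv^t$ equals $x_i^s x_i^t = x_i^{s+t}$ for every $i \in I$, so $\Xv^s \star \Xv^t = \Xv^{s+t}$. The crucial observation for the two remaining families is that the support structure relative to $J$ is preserved. Because $\Xv^s$ carries the value $x_i^s$ at every position of $I$, including those in $J$, while $\Yv^t$ vanishes exactly on $J$, the product $\Xv^s \star \Yv^t$ vanishes on $J$ and equals $x_i^{s+t}$ on $I \setminus J$; this is precisely the definition of $\Yv^{s+t}$, whence $\Xv^s \star \Yv^t = \Yv^{s+t}$. Likewise $\Yv^s \star \Yv^t$ still vanishes on $J$ and equals $x_i^{s+t}$ on $I \setminus J$, so $\Yv^s \star \Yv^t = \Yv^{s+t}$.

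It then remains to track the exponents produced by each index range. As $s,t$ run over $\{1,\dots,\ell\}$, the sum $s+t$ attains every integer in $[2,2\ell]$, giving the $\Xv^t$ for $2 \leqslant t \leqslant 2\ell$. The mixed products, with $1 \leqslant s \leqslant \ell$ and $\ell+1 \leqslant t \leqslant k$, yield $\Yv^{s+t}$ for every $s+t$ in $[\ell+2,\ell+k]$, while the products of two $\Yv$'s, with $\ell+1 \leqslant s,t \leqslant k$, yield $\Yv^{s+t}$ for every $s+t$ in $[2\ell+2,2k]$. The only point that needs a moment's care is the bookkeeping showing that these two intervals of $\Yv$-exponents merge into the single interval $[\ell+2,2k]$: their union has no gap precisely because $\ell+k \geqslant 2\ell+1$, i.e. because $k > \ell$, and its upper end is $2k$. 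Collecting the resulting vectors gives exactly the generating set $\{\Xv^t : 2 \leqslant t \leqslant 2\ell\} \cup \{\Yv^t : \ell+2 \leqslant t \leqslant 2k\}$ claimed in the lemma. I do not expect a genuine obstacle here; the entire content lies in the two identities $\Xv^s \star \Yv^t = \Yv^{s+t}$ and $\Yv^s \star \Yv^t = \Yv^{s+t}$, which hinge on the $J$-support of the $\Yv^t$, and in the elementary merging of the exponent ranges.
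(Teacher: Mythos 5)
Your proof is correct and follows essentially the same route as the paper's: both reduce to computing the pairwise star products of the generators $\Xv^t$, $\Yv^t$ and observing that $\Xv^s\star\Xv^t=\Xv^{s+t}$ while any product involving a $\Yv$ gives $\Yv^{s+t}$ because the $J$-support is annihilated. The only difference is cosmetic --- the paper bundles the case analysis into a single family $\Zv^t$, whereas you treat the three product families separately and make explicit the range-merging condition $k>\ell$ that the paper leaves implicit.
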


\begin{proof}
Let us define: 
\begin{equation*}
\Zv^t \eqdef \left\{
\begin{array}{ll}
\Xv^t       & \text{if } 1 \leqslant t \leqslant \ell \\
&\\
\Yv^t & \text{if } \ell +1 \leqslant t \leqslant k.
\end{array} \right.
\end{equation*}
Obviously, $<\code{C}_I^2>$ is generated by
the vectors $\Zv^r \star \Zv^s$  where $r$ and $s$ range over $\{1,\dots{},k\}$.
We notice now that
\begin{equation*}
\Zv^r \star \Zv^s \eqdef \left\{
\begin{array}{ll}
\Xv^{r+s}       & \text{if } r \text{ and }s \in \{1,\dots{},\ell \}\\
               & \\
\Yv^{r+s}       & \text{if } r \text{ or } s \notin \{1,\dots{},\ell \}.
\end{array} \right. 
\end{equation*}
In particular, the following equality holds:
$$
\Big \{ \Zv^r \star \Zv^s ~|~ 1 \leqslant r \leqslant k \text{ and } 1 \leqslant s \leqslant k\Big \} = 
\Big \{ \Xv^t ~|~ 2 \leqslant t \leqslant 2\ell \Big \} \bigcup \Big \{ \Yv^t ~|~ \ell+2 \leqslant t \leqslant 2k \Big \}.
$$ \qed
\end{proof}

The next step is to find some linear relations between the $\Xv^t$'s and the $\Yv^t$'s.
This is achieved by

\begin{lemma} \label{lem:dep}
If    $\ell + \card{J} + 2 \leqslant t \leqslant   2\ell$,
then $\Xv^t$ belongs to the vector space generated by
  $$\displaystyle \bigcup_{ u = \ell+2}^{t-1} \Big\{\Xv^u, \Yv^u  \Big\} \bigcup \Big\{\Yv^t \Big\}.$$ 
  \end{lemma}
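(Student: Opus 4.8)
The plan is to isolate the part of each vector that lives on $J$. First I would record the elementary observation that, by the definitions of $\Xv^u$ and $\Yv^u$, the difference $\Xv^u - \Yv^u$ vanishes on every position of $I \setminus J$ and equals $x_i^u$ on each position $i \in J$; in other words its support is contained in $J$ and, read as an element of $\fq^{\card{J}}$, it is simply $(x_i^u)_{i \in J}$. The trivial identity $\Xv^t = \Yv^t + (\Xv^t - \Yv^t)$ then reduces the lemma to showing that the $J$-supported vector $\Xv^t - \Yv^t$ lies in the span of the $J$-supported vectors $\{\Xv^u - \Yv^u : \ell+2 \leqslant u \leqslant t-1\}$, since once this is known, adding $\Yv^t$ and re-expanding each $\Xv^u - \Yv^u$ as $\Xv^u$ minus $\Yv^u$ keeps every term inside $\bigcup_{u=\ell+2}^{t-1}\{\Xv^u,\Yv^u\} \cup \{\Yv^t\}$.

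Working inside $\fq^{\card{J}}$, the problem becomes a pure statement about powers: I must exhibit $(x_i^t)_{i\in J}$ as a linear combination of the vectors $(x_i^u)_{i \in J}$ with $u$ ranging over $\{\ell+2,\dots,t-1\}$. Here the hypothesis $\ell + \card{J} + 2 \leqslant t$ does the work, since it guarantees that this index range contains at least $\card{J}$ consecutive integers (for instance $t-\card{J},\dots,t-1$, each of which is $\geqslant \ell+2$). Because the $x_i$ with $i \in J$ are distinct, the generalized Vandermonde matrix built from $\card{J}$ consecutive exponents is invertible — its determinant factors as a product of the $x_i$ raised to the smallest exponent times an ordinary Vandermonde determinant — so those $\card{J}$ vectors already span the whole of $\fq^{\card{J}}$. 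In particular $(x_i^t)_{i\in J}$ is one of their combinations, which is precisely what is required.

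Substituting back, an identity $(x_i^t)_{i\in J} = \sum_{u=\ell+2}^{t-1} \lambda_u\, (x_i^u)_{i \in J}$ lifts coordinatewise to $\Xv^t - \Yv^t = \sum_{u} \lambda_u (\Xv^u - \Yv^u)$, whence $\Xv^t = \Yv^t + \sum_u \lambda_u \Xv^u - \sum_u \lambda_u \Yv^u$ sits in the claimed span. The only delicate point I foresee is the invertibility step in the degenerate case where some $x_i$ with $i \in J$ equals $0$: that coordinate is then identically $0$ for every power $u \geqslant 1$, so one should restrict the Vandermonde argument to the subset of $J$ on which $x_i \neq 0$. The $\card{J}$ available consecutive exponents still suffice to span that smaller coordinate subspace, and the zero coordinate is matched trivially on both sides of the relation. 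I expect this bookkeeping, rather than any genuine difficulty, to be the only thing that needs care.
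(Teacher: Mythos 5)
Your proof is correct and follows essentially the same route as the paper's: both hinge on the observation that $\Xv^u-\Yv^u$ is supported on $J$ where it equals $(x_i^u)_{i\in J}$, and that the $\card{J}$ consecutive exponents $t-\card{J},\dots,t-1$ (all $\geqslant \ell+2$ by the hypothesis on $t$) suffice to reproduce $(x_i^t)_{i\in J}$. The only difference is presentational: where you argue existence of the coefficients via invertibility of a generalized Vandermonde matrix (and must then patch the degenerate case $x_i=0$), the paper writes down the explicit monic relation coming from $R(U)=\prod_{i\in J}(U-x_i)\,U^{t-\card{J}}$, which vanishes at every $x_i$ with $i\in J$ without any case distinction.
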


\begin{proof}
We consider $U$ as an indeterminate and we define the polynomials $\varphi(U)$ and $R(U)$ as
$\displaystyle \varphi(U) \eqdef \prod_{i \in J} (U - x_i)$ and $R(U) \eqdef \varphi(U) U^{t-\card{J}}$. The degree of $R(U)$ 
is equal to $t$ and hence satisfies $\deg(R) \leqslant 2\ell$. $R(U)$ can also be viewed as the polynomial $\displaystyle \sum_{s = t-\card{J}}^t r_s U^s$
where each $r_s$ belongs to $\fq$ and $r_t = 1$. One can see that by construction of $R(U)$ when $i \in J$ then 
$\displaystyle R(x_i) = \sum_{s = t-\card{J}}^t r_s x_i^s =0$. So if we denote by $\Xv^s_i$ (\textit{resp.} $\Yv^s_i$) 
the entry of $\Xv^s$ (\textit{resp.} $\Yv^s$) at position $i$ we equivalently have when $i \in J$: 
$$
\sum_{s = t-\card{J}}^t r_s \Xv_i^s =  \sum_{s = t-\card{J}}^t r_s x_i^s = R(x_i)= 0. 
$$
By the very definition of $\Yv_i^s$ which is equal to $0$ when $i \in J$, we have that
$\sum_{s = t-\card{J}}^t r_s \Yv_i^s=0$.  On the other hand by definition of $\Xv^s$ and $\Yv^s$, we also have that
$$
\sum_{s = t-\card{J}}^t r_s \Xv_i^s = \sum_{s = t-\card{J}}^t r_s \Yv_i^s
$$
for $i$ in $I \setminus J$. Therefore in all cases we have
$$
\sum_{s = t-\card{J}}^t r_s \Xv_i^s = \sum_{s = t-\card{J}}^t r_s \Yv_i^s,
$$
 and since $r_t = 1$ we 
can write that:
$$
\Xv^t = \sum_{s = t-\card{J}}^t r_s \Yv^s - \sum_{s = t-\card{J}}^{t-1} r_s \Xv^s.
$$
This concludes the proof of the lemma by noticing that $t \geq |J|+\ell+2$ implies that 
the $s$ which appears in the sum above is larger than or equal to  $\ell+2$. \qed
\end{proof}

It remains to prove that the generating set obtained by removing the linear relations 
obtained in Lemma \ref{lem:dep} is now an independent set.  
\begin{proposition} \label{prop:basis}
Assume that $|J| \leq \ell-1$ and $|I|-|J| \geq 2k$, then 
the set of $\Xv^t$'s with $2 \leqslant t \leqslant \ell + \card{J} + 1$ and $\Yv^t$'s with 
$\ell+ 2 \leqslant t \leqslant 2k$ form a basis of $<\code{C}_I^2>$.
\end{proposition}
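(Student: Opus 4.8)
The plan is to prove that the proposed family is a basis by establishing that it both spans $<\code{C}_I^2>$ and is linearly independent. Spanning is essentially already in hand: Lemma~\ref{lem:generating_set} gives the generating set $\{\Xv^t : 2 \leqslant t \leqslant 2\ell\} \cup \{\Yv^t : \ell+2 \leqslant t \leqslant 2k\}$, and applying the relations of Lemma~\ref{lem:dep} by downward induction on $t$ eliminates every $\Xv^t$ with $\ell+\card{J}+2 \leqslant t \leqslant 2\ell$, leaving exactly the claimed family. Here the hypothesis $\card{J} \leqslant \ell-1$ is what guarantees $\ell+\card{J}+1 \leqslant 2\ell$, so that no $\Xv^t$ outside the range of Lemma~\ref{lem:generating_set} is ever needed. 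Since the claimed family has exactly $(\ell+\card{J}) + (2k-\ell-1) = 2k-1+\card{J}$ elements, it will be a basis as soon as I show it is independent.

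For independence I would start from an arbitrary vanishing combination $\sum_{t=2}^{\ell+\card{J}+1} a_t \Xv^t + \sum_{t=\ell+2}^{2k} b_t \Yv^t = 0$ and read it off coordinate by coordinate, splitting $I$ into $J$ and $I \setminus J$. On a position $i \in I \setminus J$ one has $\Xv^t_i = \Yv^t_i = x_i^t$, so the relation collapses to $P(x_i)=0$ for the single polynomial $P(X) = \sum_{t=2}^{2k} c_t X^t$, whose coefficients are $c_t = a_t$ for $t \leqslant \ell+1$, $c_t = a_t + b_t$ on the overlap $\ell+2 \leqslant t \leqslant \ell+\card{J}+1$, and $c_t = b_t$ for $t \geqslant \ell+\card{J}+2$. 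Since $P$ has no term below $X^2$, I write $P = X^2 Q$ with $\deg Q \leqslant 2k-2$; the $\card{I}-\card{J} \geqslant 2k$ values $x_i$ ($i \in I \setminus J$) are distinct, at most one of them is zero, so $Q$ vanishes at at least $2k-1$ distinct nonzero points and must be identically zero. Hence every $c_t = 0$, which already forces $a_t = 0$ for $2 \leqslant t \leqslant \ell+1$, $b_t = 0$ for $\ell+\card{J}+2 \leqslant t \leqslant 2k$, and $a_t = -b_t$ on the overlap.

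It then remains to kill the overlap coefficients, and for this I would use the positions $i \in J$, where $\Yv^t_i = 0$ and only the $\Xv$-part survives. Feeding in the relations $a_t = 0$ for $t \leqslant \ell+1$ just obtained, the $J$-equations become $S(x_i)=0$ for $i \in J$, where $S(X) = \sum_{t=\ell+2}^{\ell+\card{J}+1} a_t X^t = X^{\ell+2} T(X)$ with $\deg T \leqslant \card{J}-1$. Counting again, $T$ vanishes at the $\card{J}$ distinct points $x_i$ ($i \in J$), which forces $T = 0$ provided those points are nonzero, whence $a_t = 0$ and then $b_t = -a_t = 0$ on the overlap. All coefficients are then zero and independence is proved.

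The main obstacle is precisely this last root count, and it is where the only genuine subtlety lies: both $P$ and $S$ vanish at $0$ by construction, since they have no low-degree terms, so an evaluation point equal to $0$ contributes no constraint. Among the points of $I \setminus J$ this is harmless because the bound $\card{I}-\card{J} \geqslant 2k$ leaves a root to spare, but a zero evaluation point lying in $J$ would actually destroy independence: one checks that $X^{\ell+2}\prod_{i \in J \setminus \{i_0\}}(X-x_i)$ then yields a nontrivial relation. The clean statement therefore relies on the evaluation points being nonzero, equivalently $0 \notin \{x_i : i \in L\}$, which I would either record as a standing non-degeneracy assumption or verify directly from the key-generation procedure; the rest is bookkeeping of the three coefficient ranges and the two degree counts.
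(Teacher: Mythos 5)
Your proof is correct and follows essentially the same route as the paper's: spanning via Lemma~\ref{lem:dep}, then independence by splitting the positions into $I\setminus J$ and $J$ and running the same two polynomial root-counting arguments (the paper's $R$ and $Q=U^{\ell+2}S$ are your $P=X^2Q$ and $X^{\ell+2}T$). Your treatment is in fact slightly more careful than the paper's on the two points where a zero evaluation point could interfere: the paper simply asserts ``$\card{I}-\card{J}\geqslant 2k$ implies $R=0$'' and ``all $x_i$'s are different from $0$,'' whereas you make the degree/root bookkeeping explicit and correctly flag the non-vanishing of the $x_i$ for $i\in L$ as a needed non-degeneracy condition.
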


\begin{proof}
A consequence of Lemma~\ref{lem:dep} is that $\Xv^t$ with 
$2 \leqslant t \leqslant \ell + \card{J} + 1$ and $\Yv^t$ with 
$\ell+ 2 \leqslant t \leqslant 2k$ generate the code $<\code{C}_I^2>$ but it remains to prove that they are
linearly independent. For this purpose, let us assume that there exists a linear relation 
between them \textit{i.e.}, there exist $a_s$ and $b_s$ in $\fq$ for 
$ 2 \leqslant s \leqslant 2k$ such that:
\begin{equation} \label{eq:rellin}
\sum_{s = 2}^{\ell+|J|+1} a_s \Xv^s + \sum_{s = \ell+2}^{2k} b_s \Yv^s = 0.
\end{equation}
By setting $a_s=0$ for $\ell+|J| +2 \leqslant s \leqslant 2k$ and
$b_s=0$ for $2 \leqslant s \leqslant \ell+1$,
Equation~\eqref{eq:rellin} can be rewritten as: 
\begin{equation} \label{eq:rellin2}
\sum_{s = 2}^{2k} \left(a_s \Xv^s + b_s \Yv^s \right) =0.
\end{equation}
Let us denote $\displaystyle R(U) \eqdef \sum_{s = 2}^{2k} (a_s + b_s)
U^s$. We know that if $i \notin J$ then $\Yv_i^s = \Xv_i^s = x_i^s$
for $s$ in $\{2,\dots{},2k\}$. 
Therefore by Equation~\eqref{eq:rellin2} we
have $R(x_i)=0$ for any $i \notin J$. 
As we have assumed that $\card{I} - \card{J} \geqslant 2k$, it implies
that $R(U) = 0$ or equivalently $a_s = - b_s$ for all $s$. In
particular $a_s = 0$ for $2 \leqslant s \leqslant \ell + 1$.
On the other hand, when $i \in J$, we have $\Yv^s_i = 0$ and $\Xv_i^s = x_i^s$ for any
$s$ in $\{\ell+2,\dots{},\ell+\card{J}+1\}$. Hence when $i \in J$, Equation~\eqref{eq:rellin} leads in  fact
to:
\begin{equation} \label{equ:xv}
\sum_{s=\ell+2}^{\ell+\card{J}+1}a_s \Xv_i^s = 0.
\end{equation}
Now let us consider  $\displaystyle Q(U) \eqdef
\sum_{s=\ell+2}^{\ell+\card{J}+1}a_s U^s$ and observe that there exists 
some polynomial $S(U)$ with $\deg(S) \leqslant \card{J} - 1$ such that:
$$
Q(U) = U^{\ell+2} S(U). 
$$
From Equation~\eqref{equ:xv} we know that $Q(x_i) = 0$ for all $i$ in $J$.
Since all $x_i$'s are different from  $0$, this implies that $S(x_i) = 0$. Since $\deg(S) \leqslant
\card{J} - 1$ this means that $S(U) = 0$, and 
therefore $a_s = 0$ for all $ \ell + 2 \leqslant s \leqslant \ell +
\card{J} + 1$.
Then equation~\eqref{eq:rellin} holds if and only if all the
coefficients $a_s$ and $b_s$ are zero, which means that 
$\Xv^t$ with 
$2 \leqslant t \leqslant \ell + \card{J} + 1$ and $\Yv^t$ with 
$\ell+ 2 \leqslant t \leqslant 2k$
form indeed a basis of $<\code{C}_I^2> $ whose dimension
is therefore $2k-1+\card{J}$.
\qed
\end{proof}

Proposition~\ref{prop:dsg} immediately follows from 
 Proposition~\ref{prop:basis} which characterises a basis of $<\code{C}_I^2> $.
\end{document}